\theoremstyle{definition}
\newtheorem{definition}{Definition}[subsection]
\newtheorem{example}{Example}[subsection]
\newtheorem{bemerkung}{Remark}[subsection]
\theoremstyle{plain}
\newtheorem{lemma}{Lemma}[subsection]
\title{Phenomenon-Signal Model: Formalisation, Graph and Application}
\author{Hans Nikolaus Beck\footnote{Robert Bosch GmbH, Abstatt}  \and Nayel Fabian Salem \footnote{Technische Universität Braunschweig, Institute of Control Engineering} \and Veronica Haber\footnote{PROSTEP AG, München} \and Matthias Rauschenbach\footnote{Fraunhofer Institute for Structural Durability and System Reliability LBF, Darmstadt} \and Jan Reich\footnote{Fraunhofer Institute for Experimental Software Engineering IESE, Kaiserslautern}}
\begin{document}

\maketitle

\begin{abstract}
Considering information as the basis of action, it may be of interest to examine the flow and acquisition of information between the actors in traffic. The central question is: Which signals does an automated driving system (which will be referred to as an automaton in the remainder of this paper) in traffic have to receive, decode or send in road traffic in order to act safely and in a manner that is compliant with valid standards. The phenomenon-signal model (PSM) is a method for structuring the problem area and for analysing and describing this very signal flow. The aim of this paper is to explain the basics, the structure and the application of this method.
\end{abstract}

\section{Motivation and Problem}\label{sec:motivation}

In the project \emph{``Verifikation und Validierung automatisierter Fahrzeuge L4/L5''} (English: Verification and validation of automated vehicles L4/L5, \emph{VVM for short}\footnote{The project \enquote{Verifikation und Validierung automatisierter Fahrzeuge L4/L5} is part of the \enquote{VDA Leitinitiative} and part of the \enquote{PEGASUS project family}, publicly funded by the German Federal Ministry for Economic Affairs and Climate Action (BMWK, http://www.bmwi.de). Web: www.vvm-projekt.de}), the fundamental question is how safety and social or legal compliance of automated driving systems can be achieved and justified methodologically. VVM builds on the results of the PEGASUS project. 
	 
From the claim that the automaton should be a member of the socio-technical traffic society, i.e. move in an urban environment (specifically: intersections) without the intervention of a human driver, the sheer number of possible events and situations impede the application of traditional engineering processes. This paper is motivated by and aims to make a contribution on how to deal with this aspect called \emph{Open Context}. 
 
As in every engineering process, the construction of such an automaton starts with the elicitation of requirements. According to the SAE \cite{sae} classification, the construction of an automaton capable of being part of a human-dominated interactive system is equivalent to the construction of a level 4/5 system. Requirements formulated as a collection of system functions are not sufficient as there is an infinite variance of possible situations an automaton can be confronted with in an open context. The ISO 26262 \cite{iso3} \cite{iso4} \cite{iso5} safety standard, which is the central safety standard for the automotive industry, is a manifestation of the idea that system functions can be thought of separately from the operational design domain and the conditions and demands of their use can be determined systematically. In the SOTIF standard \cite{sotif}, it becomes evident that industry is aware that in an open context, these conditions for using system functions can be very varied. Therefore, the system will have to realise complex behaviour patterns in which different combinations of functions are to be called up depending on the situation. Thus, the determination of requirements in an open context becomes a search for behaviour, or more precisely target behaviour. This work aims to find and describe the target behaviour together with the necessary capabilities and properties of the system.

To a large extent, traffic events are also communication events. Since the automaton itself is not a subject, but should interact with subjects in traffic, it seems reasonable to investigate information signal flows in traffic. Through appropriate modelling, it should be possible to algorithmise the formalisation presented here and implement it in a computer programme. As a basic building block for this, it will also be necessary to identify the behaviour of human road users, that is, behaviour according to legal or societal standards, which will be referred to as \textit{norm behaviour} in the following.
 
Within the framework of his philosophy, the philosopher Edmund Husserl gave detailed thought to the question of how perception and communication between subjects occur \cite{EdHrl}. An essential finding is that the intentionality of signs and words, but also the previous experience of the subjects, is constitutive for the way in which those signs and words are understood. Husserl's work strongly suggests that the flow of information alone cannot explain action. Heard or seen information becomes meaningful based on the individual history of the receiving subject; information thus becomes intentional (sender) and subjective (receiver). 

In the context of research on artificial intelligence and the resulting agent models, various parts of the acting and decision-making human have been brought into view and models have been developed. The basic question of many works is how a machine directs its actions according to a goal. This has led to various algorithms based on mathematical logic (please refer to  \cite{bradko} \cite{schalkoff} \cite{agenten}) and others).

By extending the logical means, for example with situation calculus \cite{Raymond}, approaches are being pursued that aim at logically including situations and events in decision-making in addition to goals. Another adaptation concerns modal logic, which adds the categories ``necessary'' and ``possible''. This is and has been an attempt to model the knowledge in the agent and its perception \cite{Hintikka}. All these efforts are complemented by attempts to mathematically model psychological insights into human knowledge and decision-making (cf. \cite{Tenenbaum} \cite{CompPsych}).

\section{Approach}\label{sec:ansatz}

Traffic events as communication events and traffic events as a space regulated  by legal provisions, the search for and description of target behaviour – these things call for a model that can describe both the subjective side of the individual road user with their prior knowledge and the objectification towards legal texts and the construction of an automaton.  In weighing these reasons, Husserl's ideas then appear as the most attractive approach for a model to tackle the tasks outlined above and to make a significant contribution to their solution.
  
Before presenting the approach, a general remark should be made: All considerations described here refer to a scenario of the use cases chosen in the context of the VVM project, which can be seen in Figure \ref{fig:szenariosmall}. Zones like b1, b2 etc. depicted in the figure are a suitable modelling tool for dividing areas of a scenario according to relevance. However, describing the way in which such zones can be systematically derived is beyond the scope of this paper (cf. \cite{Butz}).  

\begin{figure} [H]
	\centering
	\includegraphics[width=0.5\linewidth]{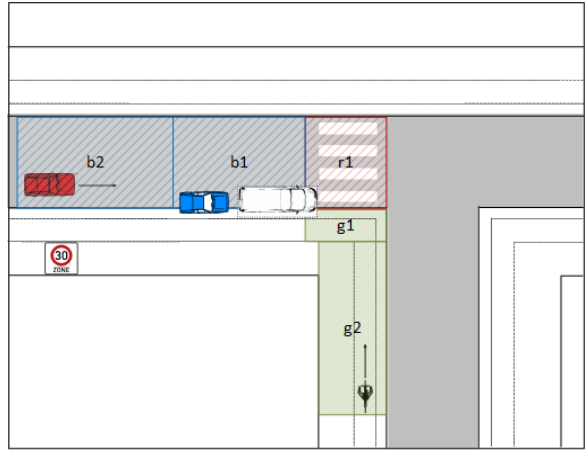}
	\caption[Abbildung]{Example scenario, ego vehicle in red }
	\label{fig:szenariosmall}
\end{figure}

Using Husserl's phenomenology \cite{EdHrl} as the basis of this work, it follows that the approach presented here should be called the \emph{phenomena-signal model}, or \emph{PSM}. Put simply, the PSM is divided into the following parts: 

\begin{enumerate}
	\item Information, knowledge and prognoses are constitutive for actions. The way in which these elements are formed based on capture by senses or sensor perception can be described by means of the ideas of phenomenology.
	\item Which actions appear possible or necessary on the basis of these elements must be described transparently.
\end{enumerate}

In the context of the PSM, these tasks are opened up and made accessible to a solution by formulating actions (which are different from human actions) as \textit{rules}. This is a constructive decision. Before proceeding further, some terms relevant to the approach need to be defined.

\section{Terms}\label{sec:begriffe}

In the definition of the presented terms, strict attention was paid to whether a statement is related to the automaton or to a human being. A human being can indeed \textit{perceive}, while in the case of the automaton, we speak exclusively of \textit{measuring} or \textit{capturing}. Likewise, an automaton \textit{acts}, while in the context of this work, we only speak of \textit{taking an action} when referring to humans. Some aspects of machine action are highlighted in \cite{Missel}, of which the concept of self-originality seems particularly interesting, which aims to distinguish pre-programmed behaviour from that of an automated system. Self-originality is the idea that a system can have more causes for interacting with the environment than pure stimulus-response or input-output mechanisms. However, the PSM is not intended as a tool for investigating these issues; its focus is rather on the capture-signal-knowledge chain. Thus, the PSM is also closer to the technology that can be realised today. 
 
The term \textit{phenomenon} is, of course, central to Husserl's work. Here, Husserl's detailed definition cannot be useful, as it is extensive and not easy to understand. Therefore, in line with \cite[436\psqq]{Brd}, \textit{phenomenon} should be understood as observable circumstances of the environment that are relevant in relation to one's intention.  If \textit{information} can be understood as a set of data related to each other, then a \textit{signal} is a piece of information of importance to the perceiving subject or the machine capturing it. Together with existing knowledge from previous experience (subject) or input knowledge (the machine), a signal contributes to the formation of predictions and action decisions. 
 
For example: For a driver who knows national road traffic regulations such as the German StVO) and has some experience seeing traffic lights turn yellow, this is a signal based on which this driver can make the prediction that they are about to lose the right of way.

As a result, Figure \ref{fig:wissenundoperator} visualises the cycle of measurement, signal, knowledge and prognosis or action. It indicates what constitutes a phenomenon. Through the interaction of these factors, capturing something in the environment becomes a conscious (subject) or processable (machine) phenomenon. The symbol $\alpha A$ depicted several times in the figure stands for an effect $\alpha$ of action A. All these elements will be explained further in Section \ref{sec:sachverhaltebasis}.
	
\begin{figure}
	\centering
	\includegraphics[width=0.85\linewidth]{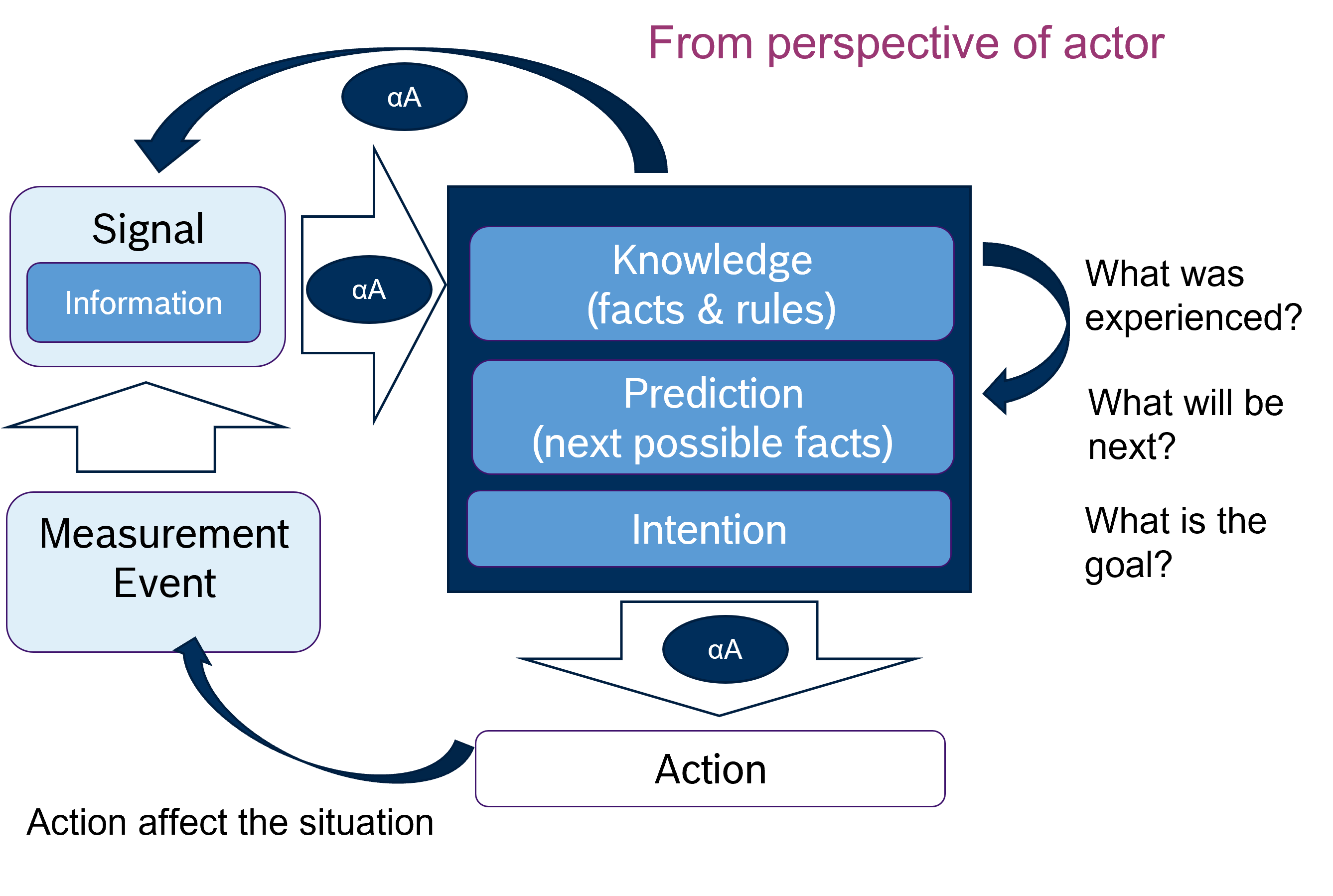}
	\caption[Ein Bild]{Measurement-signal-knowledge cycle}
	\label{fig:wissenundoperator}
\end{figure}

Rules, the essential means in this work, require clearly formulated conditions of their application. In a similar way as the legal sciences, which essentially determine behavioural norms, link legal consequences and offences in laws \cite{Eng}, \textit{facts} are all those established circumstances which justify the application of a rule, that is, represent the ``IF'' condition. Just like in the legal sciences, circumstantial evidence or factual features, here called \textit{indicators}, lead to the establishment of a fact.

``Knowledge'' in the context of the PSM thus has two aspects: On the one hand, it denotes the prior knowledge in the automaton. On the other hand, it corresponds to the captured and recognised facts. Of course, it is conceivable that these recognised facts will lead to new knowledge in the automaton, which in consequence means nothing other than learning. This aspect will be elaborated on in subsequent work. Because the rules given to the automaton are based on facts, it can be said that knowledge can be represented by facts within the framework of the PSM.

Figure \ref{fig:sachverhalte} illustrates once more the connections that are essential for this approach. Accordingly, necessary \emph{capabilities} are those which are necessary for the determination of facts and the desired actions implemented by means of rules, that is, the \emph{target behaviour}. For an exact consideration of the term \emph{capability} in the context of machines, we refer to the work of other working groups in VVM \cite{Nolte} \cite{Reschka} \cite{Bagschik}. 

\begin{figure}[h]
	\centering
	\includegraphics[width=0.8\linewidth]{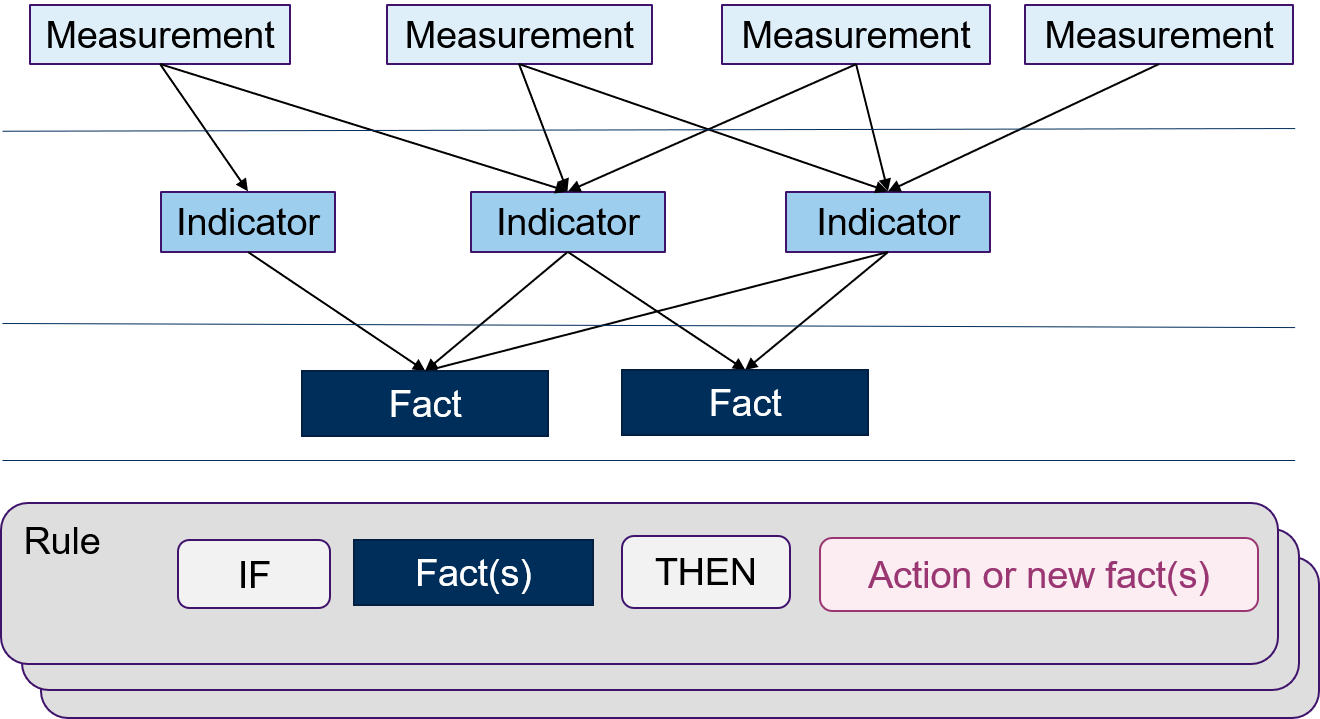}
	\caption[Abbildung]{Facts and rules}
	\label{fig:sachverhalte}
\end{figure}

Rules, and this is an essential property for the presented work, generally denote not only actions which are directed outwards, but also those which are directed inwards, which means they serve to process information or make predictions.

One may object that an action is not only based on knowledge and signals, but also essentially depends on the intention or the driving task to reach a certain goal. Since the focus of the work is on the scenario shown in Figure \ref{fig:szenariosmall}, the basic intention of the automaton is assumed to be leaving the intersection to the south or to the east in the sense of the representation. Dynamic changes of intentions are, of course, conceivable. They are connected with predictions and would have the nature of a decision. If there is a traffic jam in the city in the direction ahead, the drivers might predict that a diversion would get them to their destination faster. In the interest of a more comprehensible presentation, the formation of predictions  will not be examined in more detail here.

\section{Objectives}\label{sec:ziele}
With the descriptions up to this point, the goal of this work can now be specified: Using a suitable formalisation, it should be possible to generate a graph that can depict information flows and actions within a traffic scenario. Elements of this graph include roads, agents, various signal sources such as traffic signs or lights. This should visualise which actions are possible in relation to the available information. Structuring of the problem field and an explicit notation of assumptions are advantages that correspond to the intention of this approach, too. By applying the following three steps, this formalisation is transferred into the structure of a graph:

\begin{enumerate}
	\item Formalisation of information and its transformation into facts by means of signals.
	\item Definition of rules which are expressed using these facts.
	\item Application of all rules in order to build the graph.
\end{enumerate}

The formalisation should be done in such a way that an implementation in software is possible. The remainder of this paper is dedicated to these steps.

A note on the methodology used: This work is in the spirit of constructive-empirical testing, that is, building blocks are formed on the basis of plausible ideas. These are used to construct statements, models or solutions with the goal of recreating or essentially describing empirical situations. In general, an explanation or justification of empirical phenomena is accepted when a construction from these building blocks or a software run reproduces the phenomena with respect to the intended aspects. However, in this work, the description of the building block kit is the first goal and the basis for further work.

\section {PSM - Elements}\label{sec:sachverhaltebasis}
First of all, it should be justified why the approach visualised in Figure \ref{fig:sachverhalte} is meaningful and purposeful in the context of this paper. 

For this purpose, the question should be recalled how the capturing of aspects of the environment, information, knowledge and signals could be modelled in the context of road traffic. Human characteristics and actions in traffic, their nature in relation to information etc. must therefore be identified and applied to the machine world. An answer can obviously not be found without taking into account the specific differences between humans and machines. Neither do humans perceive objectively, nor are humans as social beings - one of the basic statements of Husserl's philosophy - independent of social manifestations, such as social norms.
 
The PSM therefore comprises two sub-considerations:  

\begin{enumerate}
	\item \emph{intentional capture} and
	\item \emph{intentional knowledge} 
\end{enumerate}

\subsection{Intentional Capture}

The term \emph{intentional capture} has to be understood as the following interpretation of Husserl's work:  A capture in the sense of one or more measurements yields information. The measurements of the size and movement of an object, for example, are connected by their relation to the object and therefore provide information about this object. But why should this object be of interest at all, or specifically its movement?

Such information is of interest to the experienced road user. It is of significance for them because their intention is to avoid collisions or assess the consequences of their actions. This intentionality in relation to information that gives meaning to the road user is what we call a \emph{signal} here. A movement from the right can be a signal for a possible right-of-way conflict. In other words, an acquired piece of information becomes a signal, that is, it becomes meaningful for the receiver, if it is an indication of something that exists in previous experience or knowledge. Fire is a sign of danger, but only if one has made harmful experiences with it. Otherwise, it could also be a signal (sign) of a heat source.

In road traffic, however, there are also deliberately placed signs, such as traffic signs, traffic lights or markings on roadways. All these things are examples of their design being intentional as well. For many people, for example, the colour red is a signal for danger due to their socialisation - a fact that warning signs and traffic lights and even brake lights make use of. So if a sender of information, in this example the traffic sign designer, is aware of the signal effect, that is, if they know which information, such as shape or colour, represents which signal for common road users, this can be exploited purposefully, that is, intentionally.
 
\emph{Intentional capture} thus means that any information in road traffic is also a signal for its receiver, the meaning of which is a function of the receiver's prior knowledge or prior experience.

\subsection{Intentional Knowledge}

The term \emph{intentional knowledge} draws on the work of the knowledge sociologists Berger and Luckmann. In their book \cite{BergLuck}, they examine how social reality is formed from the subjective experiences of individuals. For the PSM, these results are interpreted as follows: 
Social reality is conceived as social knowledge, that is, knowledge that every individual possesses. This is a deliberate simplification with regard to the objective described here. 

Due to the subjectivity explained above, each individual in society necessarily perceives their living environment - and here we limit ourselves entirely to road traffic - differently. . 
 
In the following, we will provide two examples. The way to school for the children of a school class may be dangerous at the point where the children have to cross a road. But while one parent sees the width of the road as a signal of risk, the number of parked cars may be the signal of risk for others, and yet others may be bothered by the density of traffic. From the different perceptions of the same situation, however, a common fact remains: Crossing the road at this point is risky for children.
 
In another case, the inhabitants of a city believe that the probability of an accident is comparatively high when passing a certain intersection. It does not matter whether this state of affairs (the "fact") claimed by the inhabitants may have arisen because many people actually were repeatedly involved in accidents, or whether this fate had befallen only one person who, however, was well networked in forming corresponding opinions.
 
Laws can be understood as an objectification of the sum of individual experience because their facts describe situations or circumstances that are relevant to all members of society. Hence, these facts represent the objective knowledge that results from the development of society and its phenomena. The book \cite{Eng} describes examples of how facts can also change. For example, the circumstances of theft changed because stealing electricity suddenly became possible and attractive due to industrialisation.
 
Following the construction of law, it shall thus be agreed here that \emph{facts} represent this supra-individual, i.e. objective knowledge, which is formed from the individual experiences of the individuals and is relevant for all individuals. Due to its meaning and derivation, the concept of facts stands for the \textit{intentional knowledge} of a society in the PSM. 
 
(For those who know the law, it should be noted that legal science distinguishes between matters of fact, constituent facts and facts. Not every fact is part of a matter of fact. This distinction is being dropped in the context of the PSM. The term ``matter of fact'' is not used here. As far as the PSM is concerned, only the rule-relevant, intentional knowledge referred to as ``facts'' is of importance. This will become clearer in the further sections, especially in Section \ref{sec:regeln}).

\subsection{Fact, Knowledge, Signal}
In order to make the conceptual terms developed up to this point usable, it is necessary to adopt a technical perspective. Intentional knowledge and intentional capture cannot be implemented for a machine yet. In general, one must assume that the facts of the traffic society as a social group or what the traffic regulations describe cannot be technically recorded or measured in a direct manner.

The bridge therefore is the question of how facts can be captured by a machine. Here again, legal science offers the next step. Constituent facts and circumstantial evidence are something that, when established, point to a matter of fact. This can also be observed in the physics domain, where quantities are often determined indirectly. A thermometer, for example, does not indicate the temperature directly, but actually the change in the density of matter due to the temperature. That change in density is therefore an indicator of that very change in temperature. 
 
Taking these observations into account, \emph{indicators} are to be those quantities that define the existence of a fact. The existence of an indicator can in turn be verified by one or more technically feasible measurements.

Thus, together with Figures \ref{fig:sachverhalte} and \ref{fig:wissenundoperator}, the core elements of the PSM can be described in a defining way as stated below:

\begin{definition}\label{def:psmdef}
	The following definition applies to indicators, signals and knowledge:
	
	\begin{itemize}
		\item  \emph{Facts} denote the intentional knowledge available to each participant in the social traffic society.
		\item Rules in the automaton describe actions and causality assumptions. Rules are formulated on the basis of facts (IF facts THEN...). Knowledge in the automaton is the sum of all facts and rules.
		\item 
		 \emph{Facts} also denote captured and recognised circumstances and thus the knowledge of the automaton about the environment.
		\item 
		A \emph{fact} is given when all the necessary intentional capture, mediated by \emph{indicators} defining this fact, is given.
		\item
		An \emph{indicator} is a property of the environment of the automated driving system observable by one or more technically performed measurements.  An indicator is said to be given or evident when all corresponding measurements are available.
		\item The event that an indicator has become evident corresponds to the capture event.
		\item \emph{Signal} is the element that is needed to transform a capture into a fact.
	\end{itemize}
	
\end{definition}

This closes the circle started with Sections \ref{sec:ansatz} and \ref{sec:begriffe}.

Side note: For the purpose of this work, it is sufficient to take measurements and the mechanisms behind them as given and not to question the philosophical problem of reality and positivism behind them. Entire disciplines have devoted themselves to this topic. An overview can be found, e.g., in \cite{Brd}.

\section{Symbolic Formulation}\label{sec:math}

\subsection{Sets and Maps}

The formalisation of facts uses a mathematical model that is based on sequences. The model results from the connection between facts and indicators. Facts are recognised by indicators, but the order in which these indicators are recognised may be important. This gives rise to the definitions described in this section.

\begin{definition}\label{def:basis}
	The set $C = \{ c_{1}, c_{2}, ...,c_{n}\}$ with $n\in\mathbb{N^{+}}$ of elements $c_{i}$ is called \emph{causal base set}. The elements are called \emph{causae}.
	In addition, a set $R=\{\varphi_{1}, \varphi_{2}, ...,\varphi_{n} \}$ is to be given, whose elements  $\varphi_{i}$ are called \emph{successus}. Furthermore, a mapping $\mathbf{F}$ is defined as: 
		\[
	\mathbf{F} : R \times C \longmapsto W, \quad \mathbf{F}( \varphi_{r(i)}, c_{i} ) = \varphi_{r(i)}c_{i}, \quad  \varphi_{r(i)} c_{i} \in W
	\]
	The $r(i)$ are the index functions related to $\mathbf{F}$ mapping an index of set $C$ to a index of set $R$. With this mappping, every causa of $C$ is related to a successus $\varphi$; the pairs $ \varphi_{r(i)} c_{i}$ are called  \emph{effectus}. Consequently, the set $W$ is called  \emph{set of effectus} of the causal base set. 	
\end{definition}

\begin{bemerkung}
	To simplify the notation, in the following $\varphi_{i}c_{i}$ shall always denote the corresponding pair resulting from the mapping $\mathbf{F}$. It is not intended to state that $r(i) = i$. Where it is significant, the detailed notation is used.
\end{bemerkung}
		 
\begin{example}
	The $c_{i}$ can be identified with indicators. Their effect through the successus $\varphi_{i}$ results in the possibility to measure something. For example, the indicator ``size'' allows the measurement or determination of values such as ``2 m high'', ``1 m wide''. The set $C$ then contains indicators whose effectus can be measured.
\end{example}
	
Effectus and causae can form sequences, as explained in the definition below.

\begin{definition}\label{def:seq}
	
	A \emph{sequence} $s^{k}$ of length $k$ is defined by means of the elements of an arbitrary, non-empty set $A$ as follows:
	Given $a_{i} \in  A$ and $i, k\in\mathbb{N^{+}}$:
	\begin{align*}
		s^{1} &= a_{1} \\
		s^{k} &= a_{1} a_{2}, ...,a_{k}. \\
	\end{align*}
 	The set of all sequences $S = \{s^{k1}_{1}, s^{k2}_{2},...,s^{km}_{m} \}$ with $i, k, m \in \mathbb{N^{+}}$ is called \emph{set of sequences}.
\end{definition}

\begin{definition}\label{def:map}
	Given a sequence $s^{k}$ with $k \in \mathbb{N^{+}}$ a composition is defined as:
	\[
		f : S \times S \longmapsto S, \quad 
	\]
	with 
	\begin{equation*}
		f(s^{k}_{1}, s^{1}_{2}) =  
		\begin{cases} 
			 s^{k+1} & \text{if } s^{k}_{1} \neq s^{1}_{1}  \\
			s^{k}_{1} &  \text{if }  s^{k}_{1} = s^{1}_{2}
		\end{cases}
	\end{equation*}
	
Therefore, a sequence is only extended if the new element is not the same as the last one in the sequence. To justify the first case:

	\begin{align*}
			f(s^{k}, s^{1}) = s^{k} \; s^{1} &=a_{1} a_{2}, ... a_{k} \; a_{1} \\
			&= a_{1} a_{2}, ...a_{l} \;a_{k+1}  \\
			&= s^{k+1}. 
	\end{align*}	
	
\end{definition}

\begin{lemma}
	Let $s^{k}$ and $s^{l}$ be given as two different sequences with $k,l\in \mathbb{N^{+}}$, then:
	\begin{align*}
		s^{k+l} &= s^{k} \: s^{l} \\
		&= s^{1}_{1}  \: s^{1}_{2}  \: ...  \: s^{1}_{k}  \: s^{1}_{k+1}  \: ...  \: s^{1}_{k+l}.
	\end{align*}
\end{lemma}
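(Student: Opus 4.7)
The plan is to proceed by induction on $l$, the length of the second sequence, since the composition $f$ in Definition \ref{def:map} is defined only for appending a single element; the lemma asserts that iterating this operation $l$ times produces the full concatenation.

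For the base case $l = 1$, the claim $s^k \: s^1 = s^{k+1}$ is exactly the first branch of Definition \ref{def:map}. The hypothesis that $s^k$ and $s^l$ are ``two different sequences'' is what rules out the second branch (which would leave the sequence unchanged), so this case is immediate.

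For the inductive step, assuming the claim for sequences of length $l$, I would write $s^{l+1}$ in the form $s^l \: s^1_{l+1}$ provided by Definition \ref{def:map}, and then invoke associativity of the composition to obtain
\[
s^k \: s^{l+1} \;=\; (s^k \: s^l) \: s^1_{l+1} \;=\; s^{k+l} \: s^1_{l+1} \;=\; s^{k+l+1},
\]
where the middle equality uses the inductive hypothesis and the last uses the base case again. Unfolding the definition of a length-$k+l+1$ sequence then gives the explicit form $s^1_1 \: s^1_2 \: \ldots \: s^1_{k+l+1}$ asserted in the lemma.

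The main obstacle I expect is the bookkeeping around the two-case nature of $f$: each appended element must differ from the current last entry, otherwise the extending branch does not apply and concatenation behaves idempotently. I would therefore first establish a short preliminary observation stating that if two sequences are ``different'' in the sense of the hypothesis, then no intermediate concatenation arising in the induction falls into the second branch of $f$, and that $f$ is associative on this well-behaved subset of $S$. With that mini-lemma in hand, the induction outlined above goes through routinely.
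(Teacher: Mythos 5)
Your overall strategy---peeling off the elements of $s^{l}$ one at a time via Definition \ref{def:map}---is in substance the same unfolding the paper performs: the paper unwinds both $s^{k}$ and $s^{l}$ into their length-one constituents ``after $k$ steps'' and then concatenates and renames, and your induction on $l$ is a tidier organisation of that same computation. The genuine problem is the preliminary observation on which you make the induction rest: it is not provable from the stated hypothesis. ``Two different sequences'' means only $s^{k} \neq s^{l}$, and this does not prevent the last entry of $s^{k}$ from coinciding with the first entry of $s^{l}$. Take $s^{2} = a_{1}a_{2}$ and $s^{*,2} = a_{2}a_{3}$ with $a_{1} \neq a_{3}$: the two sequences are different, yet the very first application of $f$ in your induction, $f(a_{1}a_{2},\, a_{2})$, falls into the idempotent branch, so the concatenation yields $a_{1}a_{2}a_{3}$, a sequence of length $3$ rather than $4$. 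Hence the mini-lemma you propose (``difference of the sequences rules out the second branch at every intermediate step'') is false as stated; what is actually needed is the stronger hypothesis that the last element of $s^{k}$ differs from the first element of $s^{l}$ (repetitions inside each sequence are excluded if the sequences were themselves built with $f$, but the junction is not). For comparison: the paper's own proof silently ignores this case distinction and simply unfolds, so your induction \emph{without} the mini-lemma reproduces the paper's argument at the paper's level of rigour; the patch you add is precisely the part that does not go through.

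A secondary point: the ``associativity of the composition'' invoked in the step $s^{k}\,s^{l+1} = (s^{k}\,s^{l})\,s^{1}_{l+1}$ is not an available fact, since $f$ is defined only when the second argument has length one, and juxtaposition of two sequences of arbitrary length is never introduced as an operation; the paper treats it as flat juxtaposition of symbols, which is why its proof is a plain unfolding-and-renaming. To make your induction self-contained you should either define $s^{k}\,s^{l}$ as the $l$-fold application of $f$---in which case the re-bracketing holds by definition and no appeal to associativity is needed---or state explicitly that juxtaposition is read as concatenation of symbol strings, as the paper implicitly does.
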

\begin{proof}
	According to definition  \ref{def:map} 
	\begin{align*}
		s^{k} &=   s^{k-1}  \: s^{1}_{k}.  
	\end{align*}
	$s^{k-1}$ can be expressed as
	\begin{align*}
		s^{k-1} &= s^{k-2}  \: s^{1}_{k-1}
	\end{align*}
	by applying definition \ref{def:map} and in consequence 
   	\begin{align*}
   s^{k} &=   s^{k-1}  \: s^{1}_{k} \\
   		&=   s^{k-2}  \: s^{1}_{k-1}  \: s^{1}_{k}.
   \end{align*}
   	After $k$ steps one gets
   	\begin{align*}
   	 s^{k} = s^{1}_{1}  \: s^{1}_{2}  \: ...  \: s^{1}_{k}.
   	\end{align*}
   
    If we proceed in the same way with $s^l$ and insert both representations into the assertion, then after renaming (for clarification, $s^{l}$ is written as $s^{*}$) we obtain:
    
    \begin{align*}
  	 s^{k}  \: s^{*,l} &= s_{1}  \: s_{2}  \: ...  \: s_{k}  \: s^{*}_{1}  \: s^{*}_2  \: ...  \: s^{*}_{l} \\
  	 &=  s_{1}  \: s_{2}  \: ...  \: s_{k}  \: s_{k+1}  \: ...  \: s_{k+l}\\
  	 &= s^{k+l}.
    \end{align*}

\end{proof}

\begin{bemerkung}\label{rem:interpretation}
	For the mapping $f$, different interpretations are possible:
	\begin{itemize}
		\item When  identifying the  $a_{i}$ of definition \ref{def:seq} using the elements of  $W$, then $W = \{ \varphi_{1} c_{1}, \varphi_{2} c_{2}, ...,\varphi_{k} c_{k} \} = \{s^{1}_{1}, s^{1}_{2}, ..., s^{1}_{k}\}$ with $s^{1}_{i} \in S_{W} \supseteq W$ and  $k\in\mathbb{N^{+}}$ (once more, the lower index serves for differentiation only).  
		
		\item Setting  $a_{i} = c_{i} \in S_{C} \supseteq C$ and $f_{C} : S_{C} \times S_{C} \longmapsto S_{C} $ means the mapping $f_{C}$ generates sequences like $s^{k} = c_{1}c_{2}...c_{k}$.
		
		\item When $a_{i}$ is set equal to the elements of $R$, then analogously to the procedure above, $f_{R} is: S_{R} \times S_{R} \longmapsto S_{R}$ and  $s^{k} = \varphi_{1}\varphi_{2}...\varphi_{k} $.
		
	\end{itemize}
\end{bemerkung}

\begin{bemerkung}
	
	To simplify the notation, the superscript index for the length specification can also be omitted, that is, instead of writing $s^{k}_{i}$ for a sequence of length $k$, only $s_{i}$ is written. With this designation, a length $k \in \mathbb{N^{+}}$ is implied in this case, but not indicated explicitly.
		
\end{bemerkung}

\begin{definition}
The mapping $\mathbf{H}$ is defined as follows: 
\[
\mathbf{H} : S_{R} \times S_{C} \longmapsto \overline{W}, \quad \mathbf{H}( \varphi_{h(i)}, c_{i} ) = \varphi_{h(i)}c_{i}, \quad  \varphi_{h(i)} c_{i} \in \overline{W}
\]
Again, $h(i)$ are the index functions belonging to $\mathbf{H}$ for the assignment of an index of the set $S_{C}$ to an index of the set $S_{R}$. With this mapping, each sequence from $S_{C}$, which here has the meaning of a causa, is assigned a sequence from $S_{R}$, which represents a successus.  Accordingly, the set $\overline{W}$ is called \emph{effect set} of the causal basic set. 

 \par 

\end{definition}

All the mappings introduced so far can be visualised in the following \emph{sequence realisation diagram}: 

\begin{figure}[h]
\centering

\begin{tikzcd}[row sep=huge, column sep = huge]	
	R \times C 
		\ar[r, "f_{C}; f_{R}"]
		\ar[d, "\mathbf{F}"]
	 &  S_{R} \times S_{C} \arrow[d, "\mathbf{H}", shift left=2ex] \\
	W 	\arrow[r, "f"]		& S_{W}, \overline{W} \ar[lu, dashrightarrow, shift left = 0.7ex] \ar[lu, dashrightarrow, shift right = 0.7ex]
\end{tikzcd}	

\caption{Sequence realisation diagram}
\label{fig:commute}
\end{figure}
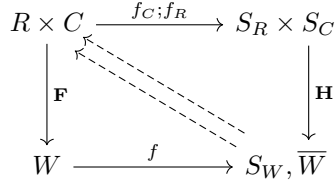

If $S_{W} = \overline{W}$ were to hold and thus commutate the diagram, this would mean that the sequence composition would result in the same set of sequences regardless of the path. The following definition describes a criterion for this.

\begin{definition}
	
The sequence realisation diagram according to Figure \ref{fig:commute} commutes, that is, $S_{W} = \overline{W}$ holds if an isomorphism $\mathbf{T} : S_{W} \longrightarrow \overline{W} $ exists with

\[
	 s_{1} = \varphi_{1}c_{1}\varphi_{2}c_{2},...,\varphi_{k}c_{k} \in S_{W}
\]	
and as a result of $\mathbf{H}$
\[
		\overline{s_{2}} = \varphi_{1}\varphi_{2},...,\varphi_{k}c_{1} c_{2},...,c_{k} \in \overline{W},
\]
it shall apply: 	
\[
 \mathbf{T} : S_{W} \longrightarrow \overline{W}, \{s_{1} \in S_{W}, \overline{s_{2}} \in \overline{W} \;|\; s_{1}  = \overline{s_{2}}  \}.
\]

\end{definition}

Because definition \ref{def:basis} does not impose requirements for the elements of set $C$, 
 all elements of set  $\overline{W}$ can be taken as elements of a set $C^{''}$ representing a 2nd order causal base set. Therefore, $\overline{W} \subseteq C^{''}$. Together with an arbitrary set $R^{''} = \{\omega_{1},\omega_{2}, ..., \omega_{o} \}$ of second-order successus and by applying $\mathbf{F}$, we get sequences of the form
  
\begin{align}
	\omega_{j} \; \varphi_{1} c_{1} \varphi_{2} c_{2} ... \varphi_{p}c_{p} =&  
	\omega_{j} \; \varphi_{1} \varphi_{2} ... \varphi_{p} c_{1} c_{2}... c_{p} \label{eq:commute}
\end{align}

with $p,o \in \mathbb{N^{+}}$ and $j \in \{1, 2, ... o\}$. 

\begin{bemerkung} ~
	
	\begin{itemize} 
		\item It is easy to see that Figure \ref{fig:commute} can assume any order in all elements.
		\item Equation \eqref{eq:commute} now makes it possible to talk about successus of sequences. Since - as stated in remark \ref{rem:interpretation} - the $\varphi_{i} c_{i} = s^{1}$, that is, the first-order effectus, can also be understood as sequences of length 1, further definitions and propositions are only expressed via sequences.
		
	\end{itemize}
\end{bemerkung}

\subsection{Calculus}\label{sec:kalkuel}

From the previous section, it became clear that any sequences $s^{k}_{i}$ can be assigned to successus and the results can be concatenated to higher-order sequences like $s^{'} = \omega_{1} s_{1} \omega_{2} s_{2} ... \omega_{n} s_{n}.$ The definitions that follow below characterise special properties of the successus, such as neutrality or constancy, and prepare the basis of a calculus. First, however, it must be specified what is meant by ``equality''.

\begin{definition}\label{def:gleich}
	
Let $s_{i}, s_{k}$ be arbitrary sequences with $ s_{i}, s_{k} \in \overline{W}$. The sequences are called \emph{E-equal} if $\omega \in R^{(n+1)} $ holds true while taking an arbitrary successus of the order $n+1$.
\[
	s_{i} \stackrel{E}{=} s_{k} \Longleftrightarrow \omega\:s_{i} = \omega\:s_{k}
\]
	
\end{definition}

Using these preconditions, special sequences can now be specified.

\begin{definition}
	A successus $\varphi^{c} \in R^{(n+1)}$ is called \emph{constancy} if:
	\[
	\varphi^{c} s_{i} = \varphi^{c} s_{j} \quad \text{ for all} \quad s_{i}, s_{j} \in \overline{W}.
	\]
\end{definition}

Constancy thus prevents the causae given by the sequence from having an effect. 

\begin{definition}
	A successus of n-th order $ \mathcal{I}$ is called \emph{neutrum} if
	\[
	\mathcal{I} s_{i} \stackrel{E}{=} s_{i} \quad \text{ for all} \quad s_{i} \in \overline{W}.
	\]
\end{definition}

\begin{definition}
	A successus of n-th order $\omega^{-1} c$ is called \emph{inverse effect} or \emph{neutraliser} if
	\begin{align*}
		 \:\omega s_{i} \omega^{-1} s_{j} & \stackrel{E}{=}  \omega \omega^{-1} s_{i} s_{j}  \\ & \stackrel{E}{=} 
		  \:\mathcal{I}s_{i} s_{j} \quad \text{	for all} \quad  s_{i}, s_{j} \in \overline{W}.
	\end{align*}
\end{definition}

\begin{lemma}\label{lem:nodouble1}
	When $\omega s_{1} \omega s_{2} $, then it holds that
	
	\[
	 \omega s_{1} \omega s_{2}  \stackrel{E}{=} \omega s_{1}s_{2}.
	\]
	
\begin{proof}
	Because the diagram \ref{fig:commute} commutes, the sequence $\omega s_{1} \omega s_{2} $ can be expressed with the mapping $f$ of \ref{def:basis} as 
	\[
		f(\omega, s_{1}) f(\omega, s_{2}) \stackrel{E}{=} f(\omega, \omega) f(s_{1}, s_{2}) = \omega s_{1} s_{2}
	\] 
in consequence of the property of $f$ described in case 2.
	
\end{proof}

\end{lemma}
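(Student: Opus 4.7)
The plan is to exploit the commutativity of the sequence realisation diagram in Figure \ref{fig:commute}, which permits rearranging successus and causae in a sequence without changing the resulting element up to E-equality. The key observation is that the lemma amounts to saying: if the same successus $\omega$ acts twice in immediate succession (after commuting past the intervening causae), then by the second case of the composition rule in Definition \ref{def:map} the two adjacent $\omega$'s collapse into one.

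Concretely, I would proceed in three short steps. First, I would apply the commutativity statement (the isomorphism $\mathbf{T}$ together with Equation \eqref{eq:commute}) to rewrite $\omega s_{1}\,\omega s_{2}$ as $f(\omega, s_{1})\, f(\omega, s_{2})$, and then reorder the composition to separate the successus from the causae, obtaining $f(\omega,\omega)\, f(s_{1}, s_{2})$. Second, I would observe that $\omega$ viewed as a sequence of length $1$ satisfies $s^{1}_{1} = s^{1}_{2}$ in the notation of Definition \ref{def:map}, so the second case of $f$ applies and yields $f(\omega,\omega) = \omega$. Third, I would reassemble, writing $\omega\, f(s_{1}, s_{2}) = \omega\, s_{1} s_{2}$, with the final equality holding by Definition \ref{def:map} (first case) since $s_{1}$ and $s_{2}$ are assumed distinct sequences in the hypothesis of the lemma.

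The main obstacle is the bookkeeping at the interface between the two interpretations of $f$: on the one hand $f$ composes sequences of the same kind (all causae, or all successus, as in Remark \ref{rem:interpretation}), but on the other hand the statement concerns mixed sequences of effectus. The commutative diagram is exactly what legitimises this switch, so the proof hinges on being explicit that the step from $\omega s_{1}\,\omega s_{2}$ to $f(\omega,\omega)\, f(s_{1}, s_{2})$ takes place in $\overline{W}$, and that the resulting identity is then pulled back to $S_{W}$ via $\mathbf{T}$, which is why the equality is only up to $\stackrel{E}{=}$ rather than literal equality. A secondary subtlety is that the idempotency step $f(\omega,\omega) = \omega$ is a feature of sequences of length one and would not hold for longer repeated blocks; this is worth pointing out so that the statement is not over-applied.
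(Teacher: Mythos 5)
Your argument is correct and follows essentially the same route as the paper's own proof: invoke the commutativity of the sequence realisation diagram (Figure \ref{fig:commute}) to rewrite $\omega s_{1}\,\omega s_{2}$ as $f(\omega,\omega)\,f(s_{1},s_{2})$, collapse $f(\omega,\omega)=\omega$ by the second case of Definition \ref{def:map}, and obtain $\omega s_{1}s_{2}$ via the first case. Your additional remarks on the $S_{W}$ versus $\overline{W}$ bookkeeping and the length-one caveat are sensible elaborations, not a different method.
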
\label{lem:nodouble2}
So there is no duplication of the same effects within a sequence. With the same structure of proof it follows

\begin{lemma}
	For a given $\omega_{1} s \omega_{2} s$, it holds that 
	
	\[
	\omega_{1} s \; \omega_{2} s  \stackrel{E}{=} \omega_{1} \omega_{2} \;s.
	\]
\end{lemma}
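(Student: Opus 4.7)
My plan is to mirror the proof of Lemma \ref{lem:nodouble1} essentially verbatim, with the roles of the causa-part and the successus-part swapped. The two statements are structurally dual: the previous lemma collapses a repeated successus $\omega$ when the two causa-parts differ, whereas the current lemma collapses a repeated causa-part $s$ when the two successus differ. Since the commuting sequence realisation diagram of Figure \ref{fig:commute} treats the two factors symmetrically, the same rearrangement-plus-idempotency argument should carry over.

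Concretely, I would first invoke commutativity of the diagram, together with Equation \eqref{eq:commute}, to rearrange $\omega_{1} s \, \omega_{2} s$ into a form in which the successus-part and the causa-part are grouped separately, obtaining up to E-equality
\[
\omega_{1} s \, \omega_{2} s \stackrel{E}{=} f(\omega_{1}, \omega_{2}) \, f(s, s).
\]
Here $f(\omega_{1}, \omega_{2}) = \omega_{1}\omega_{2}$ by case~1 of Definition \ref{def:map} (the two successus are distinct), while $f(s, s) = s$ by case~2, since appending $s$ to itself would duplicate the tail element and is therefore suppressed. Substituting back yields $\omega_{1}\omega_{2}\,s$, which is exactly the asserted E-equality.

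The step I expect to be the main obstacle is the justification that case~2 of Definition \ref{def:map} really applies to $f(s, s)$ when $s$ is a sequence of length greater than one, because Definition \ref{def:map} is literally stated only for a length-one second argument. To handle this cleanly I would either iterate $f$ element-wise along $s$ and observe that at each step the element to be appended coincides with the corresponding tail of the sequence already built up, so no extension occurs, or else appeal to E-equality itself as the ambient notion of identity that absorbs such repetitions, exactly as the previous lemma's proof does implicitly when it writes $f(\omega,\omega)=\omega$. Once this collapse is granted, the remainder of the argument is mechanical.
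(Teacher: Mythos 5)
Your proposal is correct and is essentially the paper's own argument: the paper gives no separate proof of this lemma, stating only that it follows ``with the same structure of proof'' as Lemma \ref{lem:nodouble1}, i.e.\ exactly the mirrored regrouping via the commuting diagram followed by $f(\omega_{1},\omega_{2})=\omega_{1}\omega_{2}$ (case 1) and $f(s,s)=s$ (case 2) that you carry out. Your side remark that Definition \ref{def:map} is literally stated only for a length-one second argument points to a looseness the paper itself glosses over (its own proof already applies $f$ to arguments of arbitrary length), and your handling of it stays at the same level of rigour as the original.
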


Under certain conditions, the order of the causae can be swapped. This is described by the following lemma.

\begin{lemma}
	If $\omega s_{i}$ and $\omega s_{j}$ are given as two effectus of $\overline{W}$ and it holds that
	\[
		\omega s_{i} \omega s_{j} \stackrel{E}{=} \omega s_{j} \omega s_{i} 
	\] then it holds that
	\[
			 s_{i} s_{j}  \stackrel{E}{=}  s_{j} s_{i}
	\]
	
\begin{proof}

	\begin{align*}
 \omega s_{i} \omega s_{j} &\stackrel{E}{=} \omega s_{j} \omega s_{i} \\
	\omega\:\omega s_{i} s_{j}  &\stackrel{E}{=}	\omega\:\omega s_{j} s_{i} \\
	\omega s_{i} s_{j}  &\stackrel{E}{=}	\omega s_{j} s_{i} \\
 	s_{i} s_{j} &\stackrel{E}{=} s_{j} s_{i}
	\end{align*}

\end{proof}
\end{lemma}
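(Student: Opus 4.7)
The plan is to reduce the hypothesis by successively consolidating the two occurrences of $\omega$ on each side, and then to strip off the remaining $\omega$ using the very definition of E-equality.

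First I would apply the commuting property of the sequence realisation diagram (equation \eqref{eq:commute}) to reorder each side so that the successus appear before the causae: the left-hand side $\omega s_{i} \omega s_{j}$ becomes $\omega\omega s_{i} s_{j}$, and symmetrically $\omega s_{j} \omega s_{i} \stackrel{E}{=} \omega\omega s_{j} s_{i}$. Thus the hypothesis transforms into $\omega\omega s_{i} s_{j} \stackrel{E}{=} \omega\omega s_{j} s_{i}$.

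Next, I would invoke the no-duplication rule for sequence composition (second case of $f$ in Definition \ref{def:map}, also embodied in Lemma \ref{lem:nodouble1}) to collapse $\omega\omega$ into a single $\omega$ on each side, obtaining $\omega s_{i} s_{j} \stackrel{E}{=} \omega s_{j} s_{i}$. From here the conclusion is immediate: by Definition \ref{def:gleich}, two sequences are E-equal precisely when prefixing them with an arbitrary successus of the appropriate order yields equal sequences, so $s_{i} s_{j} \stackrel{E}{=} s_{j} s_{i}$ follows at once.

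The main obstacle, in my view, is the very first step. Equation \eqref{eq:commute} was derived for a \emph{single} leading successus acting on an alternating $\varphi c$ string; here $\omega$ appears twice interleaved with the inner sequences $s_{i}, s_{j}$, which themselves already carry successus-causa structure. One must therefore parse $\omega s_{i} \omega s_{j}$ as a single longer sequence whose successus slots are $\omega, \omega$ and whose causa slots are $s_{i}, s_{j}$, and only then apply the commuting property at the right order. Once this bookkeeping is done the remaining manipulations are mechanical.
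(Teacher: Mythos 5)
Your proposal follows essentially the same route as the paper's own proof: regroup $\omega s_{i}\,\omega s_{j}$ into $\omega\omega s_{i}s_{j}$ via the commuting sequence realisation diagram, collapse the repeated $\omega$ using the no-duplication property of $f$ (Lemma \ref{lem:nodouble1}), and strip the remaining $\omega$ by Definition \ref{def:gleich}. Your closing caveat about parsing the interleaved successus/causa slots is a fair observation, but it is exactly the (implicit) bookkeeping the paper's own chain of E-equalities performs, so there is no substantive difference.
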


\section{PSM Symbolism of Facts}\label{sec:symbolik}

Now the question arises which measurements and which indicators result in facts. Measurements are strongly influenced by technical constraints. Working groups in the VVM are dealing with corresponding sensor models. In order to keep the discussion comprehensible, indicators are only defined abstractly.  However, one thing should be noted about the nature of this relationship: In general, the question of which indicators make a fact appear evident depends on the presumptions of causality that science and experience provide us with.
An empirical science hypothesises relationships on the basis of observations, which can be exploited to find indicators. In the case of the thermometer mentioned above, the temperature-dependent expansion of a liquid is such a causal assumption and thus a possible option for an indicator for temperature determination.

Normative sciences such as legal science define such causalities, or they emerge as social consensus \cite{BergLuck}.  ``Theft is when someone takes property from another'' would be such a set causality, greatly simplified here for illustration purposes.  All that matters is that the relations between facts, their indicators and measurements are a model representation of what is considered significant or intentional in the world of society. The choice of indicators and facts is thus dependent on the application on the one hand, but on the other hand, this also enables the statements of the PSM to be finely tuned to the question. As far as the present work is concerned, indicators are used as examples and without deeper justification. For details on the choice of facts and indicators, see \cite{Salem}.

Table \ref{tab:indikatoren} shows an example selection of indicators, their symbols and semantics. This selection has been made in view of Figure \ref{fig:szenario} and with the intention of presenting the basic ideas in an understandable way so that it can be easily adapted for extensive use cases.

\begin{table}[h]
	\caption[Tabelle]{Indicators}
	\small
	\label{tab:indikatoren}
\begin{tabularx}{\textwidth}{|c|c|l|X|}
	\hline 
	\textbf{Symbol} & \textbf{Indicator} & \textbf{Domain} & \textbf{Explanation} \\ 
	 &  Causa &  Successus &\\ 
	\hline 
	P & Position & b1, b2, g1, r1 & Reference to a position of a capture. This reference is made using zones according to Figure \ref{fig:szenario} \\ 
	\hline 
	A& Extension & r, f, a, l &  Extension of a capture, i.e., extension of a pedestrian f, cyclist r, car a or truck l \\ 
	\hline 
	Q& Quality & ü, r, f, a, l &  Like A, but as an abstract quality, for example reflection properties etc. ``ü'' is the symbol for a pedestrian crossing\\
	\hline
	R& Direction & $<$, $>$, +, -& Direction of the capture relative to the ego ,"+" forward,  ``$<$'' left, ``$>$'' right  \\ 
	\hline 
	B& Movement & 0, $<$, $>$, +, -& Movement according to the respective direction. 0 means no movement or stop  \\ 
	\hline 
\end{tabularx} 

\end{table}
\paragraph{}

These indicators are now embedded into the symbolic formalism according to Section \ref{sec:math} by forming the set $C$ from definition \ref{def:basis}. Their value range is the value range of the successus $R$ according to this very definition. This results in:

\begin{align*}
C &= \{P, A, Q, R, B\} \\
R &= \{b1, b2, g1, r1, r, f, a, l, <, >, + , -, 0\}
\end{align*}

Writing Table \ref{tab:indikatoren} in mathematical notation would result in: $b1P \in W, rA \in W, rQ \in W$, but, for example, $<P = \mathcal{I}$. All pairs $\varphi_{r(i)} c_{i}$ that are not covered by the range of values according to the table above realise only the neutrum $ \mathcal{I}$. By means of this interpretation, Table \ref{tab:indikatoren} can be completely defined as a set $W$. 

With reference to the sequence realisation diagram \ref{fig:commute}, effectus sequences can be built up by concatenation with $f$. The transition to the 2nd order, that is, 2nd order successus of these effectus sequences, describes captures and facts by means of the following definition: 

\begin{definition}
According to the description provided above, i.e.,
\[
C^{''} := W = \{b1P, b2P, ..., rA...rQ..., +R..,\mathcal{I}\}
\]

let the 2nd-order causae be (= $\varphi_{f(i)} s_{i}$) and
\[
R^{''} = \{!, ?, !^{-1}, ?^{-1}\}
\] 

the 2nd-order successus, including their inverse effects. Then the elements of the mapping $\mathbf{F}(\varphi,s) = \:?\:\varphi_{f(i)}s_{i} $ are called \emph{capture events}, and the elements of the mapping  $\mathbf{F}(\varphi,s) = \:!\:\varphi_{f(i)}s_{i} $ are called the \emph{facts} of a scenario.
\end{definition}

It is clear that the set of all effectus sequences $\varphi_{f(i)} s_{i}$, which has the nature of sequences of realised indicators, determines all facts and captures in the model that can be represented at all. Facts thus represent themselves as a 2nd-order successus with a causa, formally $!\:\varphi_{f(i)}s_{i} $ with $i=1...k$ and $k \in \mathbb{N^{+}}$, where the causa consists of an effectus sequence $\varphi_{f(i)} s_{i}$ of length $k$. 

By choosing the indicators and their realisations (the successus), the model can thus be completely adjusted to the problem. Now it is possible to formally represent knowledge, facts and rules in a machine by applying the calculus according to Section \ref{sec:kalkuel}. 

\section{PSM -  Rules}\label{sec:regeln}

With the facts encoded in a symbol language, rules can now be written down compactly and simply according to Section \ref{sec:ziele}.  Rules describe different aspects of the overall model. To clarify these, we once again refer to Figure \ref{fig:wissenundoperator}. All processes shown there with the symbol $\alpha A$ are subject to rules. According to their role, the following rule types can be distinguished:

\begin{figure}[h]
	\centering
	\includegraphics[width=0.7\linewidth]{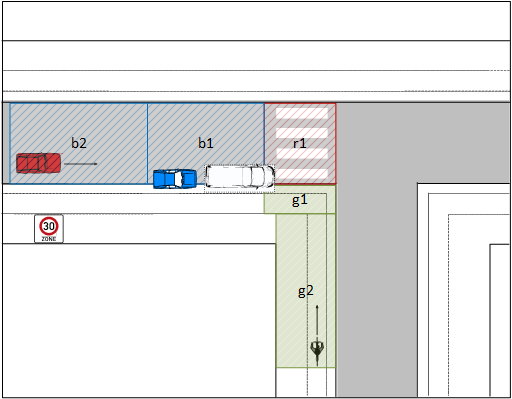}
	\caption[Abbildung]{Example scenario }
	\label{fig:szenario}
\end{figure}

\begin{itemize}
	\item \emph{Behavioural rules}: As illustrated in Figure \ref{fig:wissenundoperator}, knowledge and prognoses trigger actions. In the model represented here, facts represent knowledge. Rules that describe the action that follows from a fact and is visible to the outside are therefore called behavioural rules. 
	
	On the one hand, behavioural rules model  \emph{norm behaviour}, that is, expectations regarding behaviour that are based on the facts formed in social life (e.g. laws). On the other hand, they also represent the \emph{target behaviour}, which is the behaviour that the vendor of the automated vehicle would like to have as a product decision or that is input to the development as a requirement from laws or norm behaviour.  The semantic analysis of PSM \cite{Salem} is a method for identifying such behavioural rules.
		 
	\item \emph{Structural rules}: The $\alpha A$ in Figure \ref{fig:wissenundoperator} also represent the recognition of a signal as such. As already described in detail, this process depends on the existing knowledge, given in the form of rules and facts. According to definition \ref{def:psmdef}, describing a signal means noting or formalising the path from capture to fact. 
			 
	Modelling the path from captures to indicators is easy because physical laws will mostly play a role here. Recall the example of the thermometer mentioned above and also Figure \ref{fig:sachverhalte}. Structural rules at this point of the model can thus be obtained from the causal assumptions mentioned in Section \ref{sec:symbolik}.
				 
	Less obvious is the path from indicators to recognised facts. In the Husserlian sense, signals convey indications that -- to move on to the language of PSM -- stand for one or more rules already present in the machine and their facts. 			 
	
	Recall that a fact is a sequence (of effectus). If all indicators are realised, then the fact defined by them is captured, that is,  $?\:\varphi_{f(i)}s_{i} $. If this sequence $\varphi_{f(i)}s_{i}$ is found as part of a deposited fact of the form $?^{-1}\:\varphi_{f(i)}s_{i}$, then according to the formalism, $?\:\varphi_{f(i)}s_{i} $ becomes $!\:\varphi_{f(i)}s_{i} $ and the fact is recognised.
	
	It is precisely this linkage that is also expressed by means of structural rules and, because it completes the path from captures to facts, it does, together with the expressions of the type $?^{-1}\:\varphi_{f(i)}s_{i}$, correspond to the representation of a signal in the PSM.
		
	Structural rules also describe structural developments of a scene, such as visibility conditions (who can see whom and from where) or movements of objects or agents.
	
	\item \emph{Equivalence rules}: This kind of rules prescribes transformations of (indi\-cator-) sequences which do not change the overall effect. In the language of the formalism, this corresponds to the concept of ``E-equal'' according to definition \ref{def:gleich}. An example will illustrate this later. Rules of this class are useful, for example, when a measurement can be replaced by a technically more accessible measurement.

\end{itemize}

In order to formalise the given rules specifically, we refer again to the underlying example scenario. It should be noted at the outset that

\begin{itemize}
	\item the aim here is not to achieve geometric accuracy in the rule statements, but rather to illustrate the method and the model structure,
	\item the rules, especially the behavioural rules, are always to be formulated from the point of view of the ego vehicle; other agents are only taken into account via structural rules, 
	\item  uppercase letters X, Y, Z in the symbolic sequences stand for any indicators,
	\item lowercase letters x, y, z represent their successus,
	\item the specified rules represent applications of the calculus. Several conditions of a rule are linked with logical AND.
\end{itemize}

\begin{table}[h]
	\caption{Structural rules related to dynamics}
	\small
	\begin{tabular}{p{0.15\textwidth}m{0.3\textwidth}m{0.5\textwidth}}
		\hline
		Issue & Rule & Meaning\\
		\hline
	 	Visibility&
	 		\begin{tikzcd}[row sep=0]
	 			Xb1P \ar[rd]   & \\
	 		 & ?Yr1P\\
	 		Yr1P \ar[ru]	 &
	 	\end{tikzcd} & From zone b1, something can be captured in r1, provided something is there \\
 	\hline
		Visibility & 	\begin{tikzcd}[row sep=0.1]
			Xb2P \ar[rd]    & \\
			 & ?Yr1P\\
			Yr1P \ar[ru] &
		\end{tikzcd}  & Same for zone b2\\
	\hline
		Visibility & \begin{tikzcd}[row sep=0]
			Xb2P \ar[rd]    & \\
		 & ?Yg2P\\
			Yg2P \ar[ru] &
		\end{tikzcd}  & Something in zone g2 can be captured from zone b2.\\
	\hline
		Movement & +Bb2P $\longrightarrow$ +Bb1P & When an agent moves, their position changes from b2 to b1 and the movement is maintained\\
		\hline
		Movement & +Bb1P $\longrightarrow$ +Br1P & Same for b1 to r1 \\
		\hline
		Movement & X+Bg2P $\longrightarrow$ X+Bg1P & An object in g2 moves to g1 (in this example a cyclist).\\
		\hline
		Movement & +Bg1P $\longrightarrow$ +Br1P & g1 to r1\\
		\hline
		\\
		\end{tabular} 
	\label{tab:structure1}

\end{table}

Figure \ref{fig:szenario} shows that the ego vehicle (red) can only ``see'' the cyclist when it is in zone b2 and the cyclist is in zone g2. Furthermore, the ego vehicle also sees zone r1, that is, the pedstrian crossing, from b2 and b1. Statements of this kind have to be expressed in structural rules according to Table \ref{tab:structure1}.

\begin{table}[H]
	\caption{Structural rules related to facts}
	\small
	\begin{tabular}{p{0.15\textwidth}m{0.3\textwidth}m{0.5\textwidth}}
		\hline
		Issue & Rule & Meaning\\
		\hline
		Signal  & 
		\begin{tikzcd}[row sep=0]
			?s_{i} \ar[rd] &    \\
			& !s_{i}\\
			?^{-1}s_{i}!s_{i} \ar[ru]	 &
		\end{tikzcd} & because of Lemma \ref{lem:nodouble1} and \ref{lem:nodouble2},
		{\begin{align*} 
			&	?s_{i} \; ?^{-1} s_{i} \; ! \; s_{i}  \\
			&\stackrel{E}{=} \;	? \; ?^{-1} \; !\; s_{i} s_{i} s_{i} \\ &\stackrel{E}{=} \mathcal{I} \;! \;s_{i} & 
			\end{align*} }
		A capture becomes a fact by applying signal $	?^{-1}s_{i}!s_{i}$.  \\
		\hline
	\end{tabular} 
	\label{tab:structure2}
\end{table}

The totality of these rules does not provide an option for the cyclist to be noticed by the ego vehicle when staying in zone g1. A zone that cannot be seen from a certain zone is an example of a criticality phenomenon of occlusion. This will not be explored further here, but it shows that accident-related phenomena or findings from hazard analyses can also be represented via the modelling of causality assumptions via rules.  For further phenomena of this kind and more details, see also work from sub-project 2 of the VVM \cite{Neu}.

\begin{table}[h]
	\small
	\caption{Behavioural rules}
	\begin{tabular}{p{0.15\textwidth}m{0.3\textwidth}m{0.5\textwidth}}
		\hline
		Issue & Rule & Meaning \\ \hline		
		StVO Rule & 	
		\begin{tikzcd}[row sep=0, column sep=small]
			!+B	\ddot{u}Q+R \ar[rd] &   & \\
			& 0B	\\
			!rQg1P \ar[ru]	 &
		\end{tikzcd} 
		& These circumstances are given: pedestrian crossing ahead, ego in front of pedestrian crossing and driving, cyclist at pedestrian crossing, therefore stop.\\ \hline
		StVO Rule & 	
		\begin{tikzcd}[row sep=0, column sep=small]
			!+Bb1P	\ddot{u}Qr1P \ar[rd] &   & \\
			& 0B	\\
			!rQr1P \ar[ru]	 &
		\end{tikzcd} 
		& Same fact as above, alternative representation; cf. alternative rules. Not used in the example graph.\\ \hline
		Collision &
		\begin{tikzcd}[row sep=0, column sep=small]
			XxP \ar[rd] &   & \\
			& 00	\\
			YxP \ar[ru]	 &
		\end{tikzcd}  &Things in the same zone collide (equal successus of P)  \\ \hline
	\end{tabular} 
\label{tab:verhalten}
\end{table}
 
Table \ref{tab:structure2} contains the rules concerning signals. Behavioural rules for the given example are taken from the German Road Traffic Regulations and somewhat simplified. Essentially, the ego vehicle should stop when it detects the cyclist at or on the pedestrian crossing. The following points should be taken into account when formulating the rules:

\begin{itemize}
	\item Behaviour means an action is necessary to maintain the existing state or an existing equilibrium between automaton and environment. Driving on an open road would be an example of such equilibrium. Although this requires stepping on the gas, or pedalling on the bicycle, it is not an action in the PSM! An action is necessary if it is the only way to support the intention of the ego vehicle, such as reaching the destination. 
	\item Behavioural rules are behavioural rules of the ego vehicle. Agents are covered by structural rules.
\end{itemize}

An equivalence rule can also be formulated for this example, which replaces naming of the next zone as ``zone ahead''. These rules are listed in Table \ref{tab:equivalent}.

\begin{table}[h]
	\caption{Equivalence rules}
	\small
	\begin{tabular}{p{0.15\textwidth}m{0.3\textwidth}m{0.5\textwidth}}
		\hline
		Issue & Rule & Meaning \\ \hline		
		Forward& !b1PxQr1P $\longrightarrow$ !xQ+R & If ego vehicle is in b1 and something is given in r1, it is equivalent to the fact that an object xQ is ahead in the direction of travel.
	\\ \hline
	\end{tabular} 
\label{tab:equivalent}
\end{table}

\section{PSM Graph}
Using the rules presented in the previous section, a graph for Figure \ref{fig:szenario} can now be created. Recall that the role of information for acting is to be investigated. This is a question of the structure of information and its flows. With this claim in mind and with the formalisms worked out here, the graph can now be created according to the following algorithm, which is, in principle, suitable for machines:

\begin{enumerate}

\item Place the structural conditions for each agent as a starting node into the graph. In the example, these are $\ddot uQr1P = $ pedestrian crossing in r1, $+Bb2P = $ ego vehicle in b2 and driving, $rQg2P = $ cyclist in g2.
\item Create the next node by applying all applicable rules. The rule sequence defines the node that is newly created. In the example, a structural rule of $\ddot uQr1P$ and $+Bb2P$ would result in the node $?Yr1P$ with $Y=\ddot uQ$ and so on.

\item Repeat step 2 until none of the available rules are applicable anymore.
\item If necessary, delete all paths which do not lead to an action node.
	
\end{enumerate}

The colours have been added manually for illustration purposes. Grey represents structural circumstances. Light blue indicates nodes of capture events, green those of facts. Orange corresponds to signals whose signature needs to be present in the prior knowledge. It should be noted that rules with multiple conditions are only to be applied if all conditions are fulfilled (AND relationship).

\begin{figure} [H]
	\centering
	\includegraphics[width=1.0\linewidth,height=0.60\textheight]{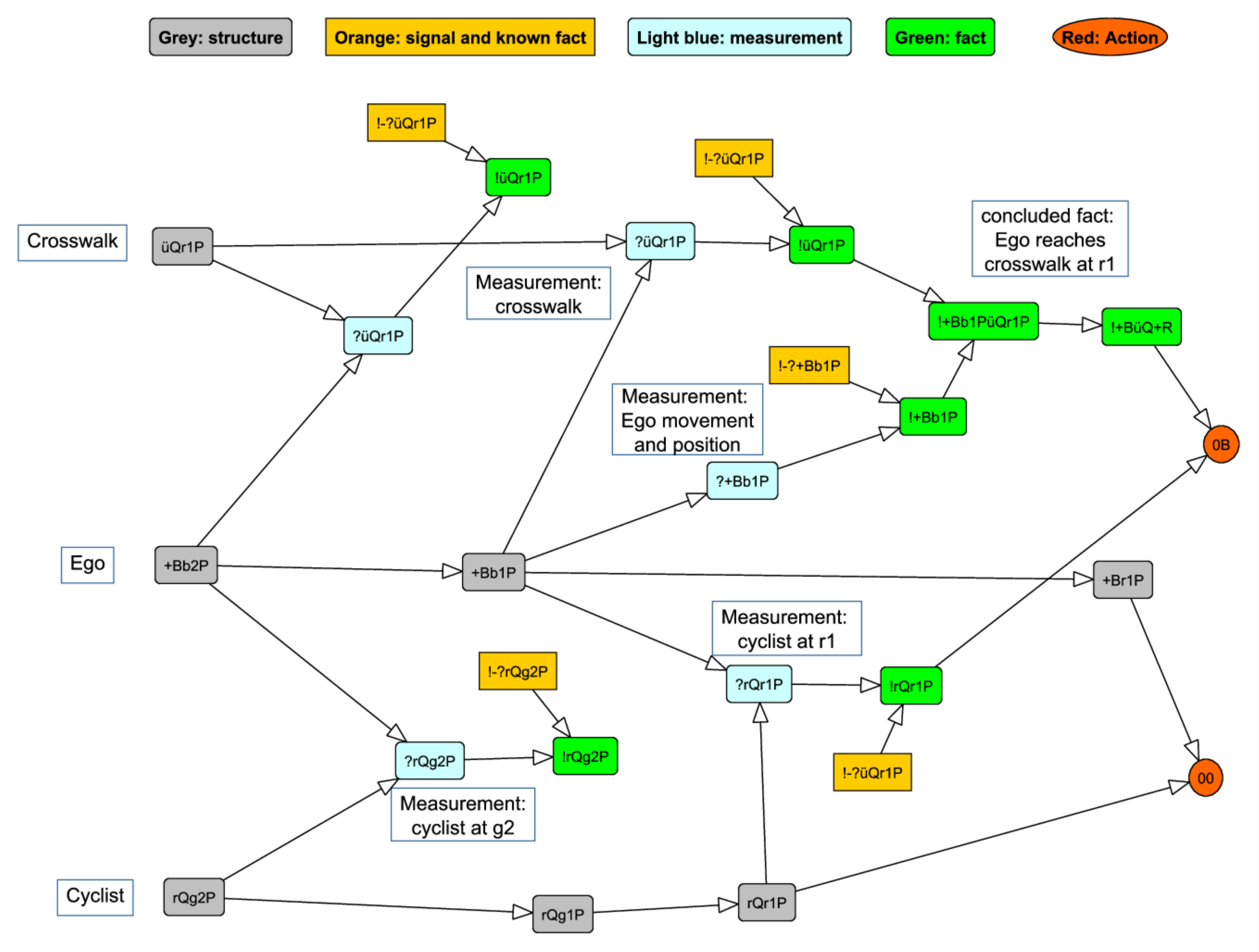}
	\caption[Abbildung]{Example PSM graph}
	\label{fig:psmgraph}
\end{figure}

As a result, this graph shows different paths from the start condition of the scenario to the actions, where different paths represent different information and signal flows. Different evaluations can now be carried out on the PSM graph by looking at paths through the graph. These paths have no direct correlation to time, but they are nevertheless sequences and thus represent a progression. In other words, the graph symbolises the evolution of informational structures of the inherently temporal scenario.

The example graph depicted in Figure \ref{fig:psmgraph} shows that there are paths from the grey starting nodes to the collision node (00). And as can be expected from experience, non-capture leads to an undesired event: both paths exclusively run via grey nodes, no node of the capture is traversed. Conversely, all paths leading to braking of the ego vehicle (0B) run via capture nodes and facts. 

As an example, consider the path for the detection of the cyclist. For the ego vehicle, this would be [+Bb2P, +Bb1P], for the cyclist [rQg2P, rQg1P, rQ1P]. If the ego vehicle is in b1 and the cyclist is in r1, that is, on the pedestrian crossing, the ego vehicle can detect him and initiate braking by forming the fact ``cyclist on pedestrian crossing '' (green node).

Although only one small example has been addressed and a small extract has been presented, some statements can be derived here:
\begin{itemize}
	\item Braking requires a fact. In order for the ego vehicle to form this fact, it needs to  understand the signal (the orange node represents the necessary prior knowledge for this).  
	\item From this it follows that the car needs the \textit{capability} to perform, among other things, the capture ?rQr1P (see path) and form the fact !rQr1P. 
	\item If a collision is to be avoided, further rules (or modified ones) are needed so that the node (00) can no longer be reached.
	\item Should the ability to form the facts be skipped (because of costs) and the ego vehicle be supplied with braking independent of indicators, rules would have to be expressed that lead directly from capture to braking (0B).
	\item A \textit{target behaviour} is represented by exactly those paths in the graph of the ego vehicle that are to be realised, described by the related behavioural rules and capabilities. The latter are indicated by the capture and fact nodes (incl. the signals) in this path.
\end{itemize}

This is where the nature of the PSM graph and the formalism for its creation as a tool become apparent. This graph structures the problem area and makes it accessible for evaluation and closer, targeted examination. In addition, the formalisation of indicators and facts allows the transparent justification of the relevance and meaningfulness of the modelled elements and assumptions and thus supports a safety-related argumentation. This follows from the described path of deriving the model constructs (indicator, fact, etc.) all the way to the graph. The decisions along this derivation are always justifiable.

\section{Summary and Future Work}

With the PSM concept, the decision-making possibilities of an L4/L5-automated vehicle are represented depending on situationally received signals about the surrounding traffic space and the behaviour of other road users. Changes in the traffic situation are inferred from the respective rule-based decision-making possibilities. In addition, the information subsequently available to the vehicle and the decision options derived from this information are checked and the graph is developed further for these. This is achieved through the formalisation of behaviour, knowledge and signals. The concept reveals possibilities for the representation and evaluation of the vehicle's behaviour and resulting scenario progressions. This shall be used in subsequent work for the development of the desired vehicle behaviour as well as its validation and verification.

In the further course of the project VVM, as mentioned above, the computer-assisted generation and evaluation of the PSM graphs shall be pursued. This is necessary for a practical application of the concept in order to be able to cope with the large number of possible progressions and variations of the scenario in reality. Based on such a possibility, accompanying questions for using the concept for system definition and verification of the sufficient safety of the implemented vehicle behaviour in traffic will be investigated.

Among other questions, the methodologically coherent and consistent generation of PSM graphs in interaction with scenarios and criticality phenomena from hazard analyses plays a role. In this context, a systematic derivation of requirements and specifications for a safety-oriented system architecture with regard to the required system behaviour is to be achieved. Another question is how the PSM concept can be implemented in order to be able to present a target-oriented definition of the totality of a vehicle's behaviour that is appropriate in each situation.

Finally, a systematic approach and an evaluation procedure are needed to determine the extent to which the capabilities designed to date, as well as the technically specified functions, ensure a vehicle's sufficiently risk-minimised operation in traffic. These should also provide evidence of the improvements or safeguards that can sufficiently mitigate the inadequacies. This is ultimately the basis for the verifiability of a vehicle's behaviour as well as its testability and verification. 

Development of the model itself is also continuing. The concatenation of rules and rule hierarchies, in particular, should be mentioned here. The former is intended to investigate the possibility of a rule differentiating a fact, which s itself subject to another rule. Rule hierarchies could be used to intercept situations when no specific rule of one level is applicable. In this way, the aim is to describe the behaviour, respectively the target behaviour, of the automaton more and more precisely and completely.

\section{Acknowledgement}
 The research leading to these results has been funded by the German Federal Ministry for Economic Affairs and Climate Action within the project ``Verifikations- und Validierungsmethoden automatisierter Fahrzeuge Level 4 und 5''. The authors would like to thank the consortium for the successful cooperation.
In addition, the authors would like to thank the partners of the project for the successful cooperation, constructive feedback and discussions during the development of the PSM.

\pagebreak

\printbibliography

\end{document}